\title{\bf Optimal Designs for Second-Order Interactions in Paired Comparison Experiments with Binary Attributes}
\author{Eric Nyarko\footnote{corresponding author:\texttt{eric.nyarko@ovgu.de}}, Rainer Schwabe \footnote{E-mail: \texttt{rainer.schwabe@ovgu.de}}\\
University of Magdeburg, \\
Institute for Mathematical Stochastics,\\ PF 4120, D-39016 Magdeburg, Germany}
\date{}
\begin{document}
\maketitle
\begin{abstract}
In paired comparison experiments respondents usually evaluate pairs of competing options. For this situation we introduce an appropriate model and derive optimal designs in the presence of second-order interactions when all attributes are dichotomous.
\end{abstract}
{\bf Keywords:} Attributes; Full profile; Interactions; Optimal design; Paired comparison experiments
\section{Introduction}
Paired comparisons are closely related to experiments with choice sets of size two. For this situation optimal designs are usually derived under the indifference assumption of equal choice probabilities where the information matrix of a paired comparison experiment in a linear paired comparison model is equivalent to the information matrix of a discrete choice experiment in a multinomial  logit model. Such experiments have received considerable attention during the last few years in many fields of applications like psychology, health economics, transport and marketing for learning consumer preferences towards new products or services. Typical with paired comparisons, judges evaluate pairs of competing alternatives in a hypothetical setting which are generated by an experimental design, and are characterized by a number of attribute levels. A comprehensive introduction to the general area can be found in the monographs of \citet{louviere2000stated} and \citet{train2003discrete} \citep[see also][]{grossmann2015handbook}.
\par
The aim of this paper is to introduce an appropriate model and derive optimal designs in the presence of interactions. In this paper we treat the case when the components of the alternatives are characterized by two-level attributes. This factorial $2^{K}$ setting has been investigated by both \citet{van1987optimal1, van1987optimal} and \citet{street2004optimal} in the case of full profiles in a main effects and first-order (two factor) interactions situation, and by \cite{schwabe2003optimal} for partial profiles. Corresponding results have been presented by \citet{grasshoff2003optimal} in the general level case in a first-order interactions setup for both full and partial profiles. Here, we treat the case of second-order (three factor) interactions and provide detailed proofs. 
\par
The remainder of the paper is organized as follows. In Section 2 a general model is introduced for paired comparisons which is related to a block (choice set) of size two. The second-order interactions model for full profiles is presented in Section 3. Optimal designs are characterized in Section 4 and the final Section 5 offers some conclusions. The proof of the major results is deferred to the Appendix.

\section{General setting}
The outcome of any experimental situation depends on some factors $K$ of influence which are normally referred to as attributes in the paired comparison experiment literature. This dependence is best described by a functional relationship $\textbf{f}$ which quantifies the effect of the alternative $\textbf{i}$ of the attributes of influence. Hence, we formalize the experimental situation by a general linear model
\begin{equation}\label{eq:1}
\begin{split}
\tilde{Y}_{na}(\textbf{i})&= \mu_{n}+\textbf{f}(\textbf{i})^{\top} \boldsymbol{\beta} + \tilde{\varepsilon}_{na},
\end{split}
\end{equation}
for the value (utility) $\tilde{Y}_{na}(\textbf{i})$ of a single alternative $\textbf{i}$ within a pair of alternatives ($a=1,2$) subject to a random error $\tilde{\varepsilon}_{na}$. The index $n$ denotes the $n$th presentation in which $\textbf{i}$ is chosen from a set $\boldsymbol{\mathcal{I}}$ of possible realizations for the alternative. In general, each alternative is characterized by a number of distinct attributes (components) of influence such that $\textbf{i}= (i_{1},\dots,i_{K})$ for $k=1,\dots,K$. In this setting $\textbf{f}= (f_{1},\dots,f_{p})^{\top}$ is a vector of known regression functions which describe the form of the functional relationship between the alternative $\textbf{i}$ and the corresponding mean response $E(\tilde{Y}_{na}(\textbf{i}))=\mathbf{f}(\mathbf{i})^{\top}\boldsymbol{\beta}$, and  $\boldsymbol{\beta}= (\beta_{1},\dots,\beta_{p})^{\top}$ is the unknown parameter vector of interest. Usually in order to make statistical inference on the unknown parameters more than one observation is presented  to get rid of the influence of the presentation effect $\mu_{n}$ due to a variety of unobservable influences. 
\par
However, unlike standard design problems, in paired comparison experiments the utilities for the alternatives are usually not directly observed. Only observations $Y_n(\textbf{i},\textbf{j})=\tilde{Y}_{n1}(\textbf{i})-\tilde{Y}_{n2}(\textbf{j})$ of the amount of preference are available for comparing pairs $(\mathbf{i},\mathbf{j})$ of alternatives $\textbf{i}$ and $\textbf{j}$ which are chosen from the design region $\mathcal{X}=\boldsymbol{\mathcal{I}}\times \boldsymbol{\mathcal{I}}$. In that case the utilities for the alternatives are properly described by the linear paired comparison model 
\begin{equation} \label{eq:2}
\begin{split}
Y_n(\textbf{i, j})=(\textbf{f}(\textbf{i})-\textbf{f}(\textbf{j}))^{\top}\boldsymbol\beta+\varepsilon_{n}, \\
\end{split}
\end{equation}
where $\textbf{f}(\textbf{i})-\textbf{f}(\textbf{j})$ is the derived regression function and the random errors $\varepsilon_{n}(\textbf{i}, \textbf{j})=\tilde{\varepsilon}_{n1}(\textbf{i})-\tilde{\varepsilon}_{n2}(\textbf{j})$ associated with the different pairs $(\textbf{i}, \textbf{j})$ are assumed to be uncorrelated with constant variance. Here, the pair effects $\mu_{n}$ become immaterial. Moreover, we note that the linear difference model considered here can be realized as a linearization of the binary response model by \citet{bradley1952rank} under the assumption $\boldsymbol{\beta}=\mathbf{0}$ \citep[see e.g.][]{grossmann2002advances}. Specifically, under this indifference assumption of equal choice probabilities, the Bradley-Terry type choice experiments in which the probability of choosing $\textbf{i}$ from the pair $(\textbf{i}, \textbf{j})$ given by $\exp[\textbf{f}(\textbf{i})^{\top}\boldsymbol{\beta}]/(\exp[\textbf{f}(\textbf{i})^{\top}\boldsymbol{\beta}]+\exp[\textbf{f}(\textbf{j})^{\top}\boldsymbol{\beta}])$ as in the work of \citet{street2007construction} can be derived by considering the linear paired comparison model. 
\par
The quality of a design is measured by its information matrix
\begin{equation}\label{eq:3}
\textbf{M}((\textbf{i}_1, \textbf{j}_1),\dots,(\textbf{i}_N, \textbf{j}_N))=\sum_{n=1}^{N}\textbf{M}((\textbf{i}_n, \textbf{j}_n))
\end{equation}
where $\textbf{M}((\textbf{i},\textbf{j}))=(\textbf{f}(\textbf{i})-\textbf{f}(\textbf{j}))(\textbf{f}(\textbf{i})-\textbf{f}(\textbf{j}))^{\top}$ is the information of a single pair $(\textbf{i},\textbf{j})$.
The performance of the statistical analysis based on a paired comparison experiment depends on the pairs (alternatives) in the choice sets that are presented. The choice of such pairs $(\textbf{i}_1,\textbf{j}_1),\dots,(\textbf{i}_N,\textbf{j}_N)$ is called a design. 
\par
In the present paper we restrict our attention to approximate or continuous designs $\xi$ as detailed in \citet{kiefer1959optimum}, which are defined as discrete probability measures on the design region $\mathcal{X}$ of all pairs $(\textbf{i},\textbf{j})$. Moreover, every approximate design $\xi$ which assigns only rational weigths $\xi(\textbf{i},\textbf{j})$ to all pairs $(\textbf{i},\textbf{j})$ in its support points can be realized as an exact design $\xi_N=((\textbf{i}_1,\textbf{j}_1),\dots,(\textbf{i}_N,\textbf{j}_N))$ for some sample size $N$. 
\par
The standardized (per observation) information matrix of an approximate design $\xi$ in the linear paired comparison model \eqref{eq:2} is defined by
\begin{equation}\label{eq:4}
\textbf{M}(\xi)=\sum_{(\textbf{i}, \textbf{j})\in\mathcal{X}}\xi(\textbf{i},\textbf{j})\textbf{M}((\textbf{i},\textbf{j})).
\end{equation} 
Note that for an exact design $\xi_N=((\textbf{i}_1,\textbf{j}_1),\dots,(\textbf{i}_N,\textbf{j}_N))$ the relation $\textbf{M}(\xi_N)=\frac{1}{N}\textbf{M}((\textbf{i}_1,\textbf{j}_1),\dots,(\textbf{i}_N,\textbf{j}_N))$ holds.
\par
Optimality criteria for approximate designs $\xi$ are functionals of $\textbf{M}(\xi)$. As in a majority of works about optimal designs for paired comparison experiments, here we confine ourselves to the $D$-optimality criterion which aims at maximizing the determinant of the information matrix. For instance, an approximate design $\xi^{\ast}$ is $D$-optimal if it maximizes the determinant of the information matrix, that is, if det$\textbf{M}(\xi^{\ast})$ $\geq$ det$\textbf{M}(\xi)$ for every approximate design $\xi$.

\vspace{5mm}

\noindent
\textbf{Example 1. } One-way layout (two levels)\par
For illustrative purposes we first consider the situation of just one attribute ($K=1$) which may vary only over two levels ($i,j=1,2$), and  we adopt the standard parameterization of effect-coding \citep[see][]{grasshoff2004optimal}. 
In this setting, the effects of each single level $i=1,2$ has parameters $\alpha_{i}$ satisfying the identifiability condition $\alpha_1+\alpha_2=0$. 
Hence,
\begin{equation}\label{eq:5}
\begin{split}
\tilde{Y}_{na}(i)&=\mu_{n}+ \alpha_{i}+\tilde{\epsilon}_{na}
\end{split}
\end{equation}
with $i\in\mathcal{I}=\{1,2\}$, where $\mu_{n}$ denotes the block effects, $n=1,\dots,N$, and $\tilde{\epsilon}_{na}$ the random error is assumed to be uncorrelated with constant variance and zero mean. For effect-coding the regression function $f=g$ is given by $g(1)=1$ and $g(2) = -1$, respectively. This ensures the usual identifiability condition  $\alpha_2=-\alpha_1$ such that
\begin{equation}\label{eq:6}
\begin{split}
\tilde{Y}_{na}(i)&=\mu_{n}+ g(i)\beta+\tilde{\epsilon}_{na},
\end{split}
\end{equation}
where $\beta=\alpha_1=-\alpha_2$.
\par
Then for paired comparisons an observation of the effects $\alpha_i - \alpha_j$ of level $i$ compared to level $j$ can be characterized by the response
\begin{equation}\label{eq:7}
\begin{split}
Y_{n}(i,j)&= (g(i)-g(j))\beta+\varepsilon_n = \alpha_i-\alpha_j+\varepsilon_n.
\end{split}
\end{equation}
Note that $\textbf{M}((i,j))=(\pm 2)^2=4$ for $i \neq j$, while $\textbf{M}((i,i))=0$. From this it is obvious that only pairs with different levels should be used and that, in particular, the design $\bar{\xi}$ which assigns equal weight $1/2$ to each of the two pairs $(1,2)$ and $(2,1)$ is optimal with resulting information
\begin{equation}\label{eq:8}
\begin{split}
\textbf{M}(\bar{\xi})=4.
\end{split}
\end{equation}
This design $\bar{\xi}$ will serve as a brick for constructing optimal designs in situations with more than one attribute later on.

\section{Model with second-order interactions}
For the setting of binary attributes the results for main effects and first-order interactions have been derived by \citet{van1987optimal} and have been extended to more than two levels by \citet{grasshoff2003optimal}. 
Here, we focus on second-order interaction effects for binary attributes. 
\par
In what follows, we take into consideration $k=1,\dots,K$ attributes having two levels each that are assumed to derive the preferences for the alternatives in a paired comparison experiment. In paired comparison experiments the alternatives are represented by combinations of attribute levels. For alternatives in a choice set, we denote by $\textbf{i}=(i_1,\dots,i_K)$ the first alternative and the second alternative by $\textbf{j}=(j_1,\dots,j_K)$ which are both elements of the set $\boldsymbol{\mathcal{I}}=\{1,2\}^{K}$ where the numbers $1$ and $2$ represent the first and second level of each attribute. The choice set $(\mathbf{i},\mathbf{j})$ is an ordered pair of alternatives $\textbf{i}$ and $\textbf{j}$ which is chosen from the design region $\mathcal{X}=\mathcal{I}\times\mathcal{I}$. For each attribute (component) $k$ the corresponding marginal model coincides with that of the one-way layout with regression functions $f_k=g$ as defined in Example~1.
\par
More formally, for the case of direct response $\tilde{Y}_{na}$ at alternative $\mathbf{i}=(i_1,\ldots,i_K)$, we consider the second-order interactions model 
\begin{equation}\label{eq:9}
\begin{split}
\tilde{Y}_{na}(\mathbf{i})&=  \mu_{n}+ \sum^{K}_{k=1}\alpha^{(k)}_{i_{k}} + \sum_{k<\ell}\alpha^{(k,\ell)}_{i_{k},i_{\ell}}+ \sum_{k<\ell<m}\alpha^{(k,\ell,m)}_{i_{k},i_{\ell},i_{m}}+ \tilde{\varepsilon}_{na},
\end{split}
\end{equation}
where $\alpha_{i_k}^{(k)}$ is the main effect of the $k$-th attribute when the corresponding level is $i_k$, $\alpha_{i_k,i_\ell}^{(k,\ell)}$ is the first-order interactions effect of the $k$-th and $\ell$-th attribute when the corresponding levels are $i_k$ and $i_\ell$, respectively, and $\alpha_{i_k,i_\ell,i_m}^{(k,\ell,m)}$ is the second-order interactions effect of the $k$-th,  $\ell$-th and $m$-th attribute when the corresponding levels are $i_k$, $i_\ell$ and $i_m$, respectively.  
\par
Moreover, by the common identifiability conditions of effect-coding the following equalities hold:
$\alpha_1^{(k)}=\beta_k$ and $\alpha_2^{(k)}=-\beta_k$, $\alpha_{1,1}^{(k,\ell)}=\alpha_{2,2}^{(k,\ell)}=\beta_{k,\ell}$ and $\alpha_{1,2}^{(k,\ell)}=\alpha_{2,1}^{(k,\ell)}=-\beta_{k,\ell}$, $\alpha_{1,1,1}^{(k,\ell,m)}=\alpha_{1,2,2}^{(k,\ell,m)}=\alpha_{2,1,2}^{(k,\ell,m)}=\alpha_{2,2,1}^{(k,\ell,m)}=\beta_{k,\ell,m}$ and $\alpha_{1,1,2}^{(k,\ell,m)}=\alpha_{1,2,1}^{(k,\ell,m)}=\alpha_{2,1,1}^{(k,\ell,m)}=\alpha_{2,2,2}^{(k,\ell,m)}=-\beta_{k,\ell,m}$.
Then
$$\boldsymbol{\beta}=(\beta_1,\dots,\beta_K,\beta_{1,2},\dots,\beta_{K-1,K},\beta_{1,2,3},\dots,\beta_{K-2,K-1,K})^\top$$ is a minimal vector of parameters, where $(\beta_1,\dots,\beta_K)^\top$ is the vector of main effects of dimension $p_1=K$, $(\beta_{1,2},\dots,\beta_{K-1,K})^{\top}$ is the vector of first-order interactions of dimension $p_2={K \choose 2}$, and $(\beta_{1,2,3},\dots,\beta_{K-2,K-1,K})^\top$ is the vector of second-order interactions of dimension $p_3={K \choose 3}$.
Hence, $\boldsymbol{\beta}$ is a vector of dimension $p=p_1+p_2+p_3=K(K^2+5)/6$. 
 \par
With the above notation the model can be rewritten as

\begin{align}\label{eq:10}
\tilde{Y}_{na}(\mathbf{i})&=\mu_n+\sum_{k=1}^{K}\beta_{k}f_k(i_k)+\sum_{k<\ell}\beta_{k,\ell}f_k(i_k)f_\ell(i_\ell) \nonumber \\
&\qquad+\sum_{k<\ell<m}\beta_{k,\ell,m}f_k(i_k)f_\ell(i_\ell)f_m(i_m)+\tilde{\varepsilon}_{na},
\end{align}
with the vector
\begin{align}\label{eq:11}
\textbf{f}(\textbf{i})=(f_1(i_1),\dots,f_K(i_K),f_1(i_1)f_2(i_2),\dots,f_{K-1}(i_{K-1})f_K(i_K),\nonumber \\
\qquad f_1(i_1)f_2(i_2)f_3(i_3),\dots,f_{K-2}(i_{K-2})f_{K-1}(i_{K-1})f_K(i_K))^{\top} 
\end{align}
of regression functions of dimension $p$. 
Also here, the first $K$ entries $f_1(i_{1}),\dots,f_K(i_{K})$ of $\textbf{f}(\textbf{i})$ are associated with the main effects.
Moreover, the second set of entries $f_1(i_{1})f_2(i_{2}),\dots,f_{K-1}(i_{K-1})f_K(i_{K})$ of $\textbf{f}(\textbf{i})$ are associated with the first-order interactions and has dimension $p_2=K(K-1)/2$, and the remaining entries $f_1(i_{1})f_2(i_{2})f_3(i_{3}),\dots,f_{K-2}(i_{K-2})f_{K-1}(i_{K-1})f_K(i_{K})$ of $\textbf{f}(\textbf{i})$ are associated with the second-order interactions and has dimension $p_3=K(K-1)(K-2)/6$.
\par
Note that $f_k=g$ from Example~1 for all $k=1,\ldots,K$. 
Finally, the resulting paired comparison model is given by
\begin{align}\label{eq:12}
Y_{n}(\textbf{i},\textbf{j})&=\sum_{k=1}^{K}(g(i_k)-g(j_k))\beta_{k}+\sum_{k<\ell}(g(i_k)g(i_\ell)-g(j_k)g(j_\ell))\beta_{k,\ell}  \nonumber \\
&\qquad+\sum_{k<\ell<m}(g(i_k)g(i_\ell)g(i_m)-g(j_k)g(j_\ell)g(j_m))\beta_{k,\ell,m}+\varepsilon_n,
\end{align}
where the corresponding regression functions may only attain the values $-2$, $0$ or $+2$.
\section{Optimal designs}
We consider the second-order interactions paired comparison model \eqref{eq:12} with corresponding regression functions $\textbf{f}(\textbf{i})$ given by \eqref{eq:11}. 
For what follows, we introduce the comparison depth $d$ as in \citet{grasshoff2003optimal} which describes the number of attributes in which the two alternatives differ. 
These sets are very essential for optimal designs construction. 
The design region $\mathcal{X}$ can be partitioned into disjoint sets such that the pairs in each set differ only in a fixed number $d$ of the attributes. 
More precisely, for a comparison depth $d=0,\dots,K$, let $\mathcal{X}_{d}=\{(\textbf{i},\textbf{j})\in\mathcal{X}: |\{k: i_{k}\neq j_{k}\}|=d\}$ be the set of all pairs of alternatives which differ in exactly $d$ attributes. 
These sets constitute the orbits with respect to permutations of both the levels $i_k=1,2$ within the attributes as well as among attributes $k=1,\dots,K$ themselves.
Note that the $D$-criterion is invariant with respect to those permutations \citep[see][]{1996optimum}. 
As a result, it is sufficient to look for optimality in the class of invariant designs which are uniform on the orbits of fixed comparison depth. 
\par
Denote by $N_{d}=2^{K}{K \choose d}$ the number of different (ordered) pairs in $\mathcal{X}_{d}$ which vary in exactly $d$ attributes and by $\bar{\xi}_{d}$ the uniform approximate design which assigns equal weights $\bar{\xi}_{d}(\textbf{i},\textbf{j})=1/N_{d}$ to each pair $(\textbf{i},\textbf{j})$ in $\mathcal{X}_{d}$ and weight zero to all remaining pairs in $\mathcal{X}$.  
We next obtain the information matrix for these invariant designs.
\newtheorem{lemma}{Lemma}
\begin{lemma}\label{lemma1}
Let $d$ be a fixed comparison depth. The uniform design $\bar{\xi}_{d}$ on the set $\mathcal{X}_{d}$ of comparison depth $d$ has a diagonal information matrix
\begin{multline*}
\begin{split}
\mathbf{M}(\bar{\xi}_{d})=\begin{pmatrix} h_{1}(d)\mathbf{Id}_{K}&\mathbf{0}&\mathbf{0}\\
 \mathbf{0} & h_{2}(d)\mathbf{Id}_{K\choose2}&\mathbf{0}\\
 \mathbf{0} &\mathbf{0}&h_{3}(d)\mathbf{Id}_{K\choose3}\end{pmatrix},
 \end{split}
\end{multline*}     
\begin{equation*}
\resizebox{1.0\hsize}{!}{
where $h_{1}(d) = \frac{4d}{K}, h_{2}(d) =\frac{8d(K-d)}{K(K-1)}\ and \ h_{3}(d) =\frac{4d(3K^{2}-6dK+4d^{2}-3K+2)}{K(K-1)(K-2)}$}.
\end{equation*}   
\end{lemma}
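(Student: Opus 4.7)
The plan is to exploit the symmetry of $\bar{\xi}_d$ and then evaluate a single diagonal entry in each of the three blocks. The key observation is that $\mathcal{X}_d$ is invariant under two natural group actions on the design region: the permutation group $S_K$ acting simultaneously on the components of both $\mathbf{i}$ and $\mathbf{j}$, and the group $\{-1,+1\}^K$ of simultaneous level-flips (swapping $1$ and $2$ in any subset of attributes for both alternatives at once), since neither action changes the set $\{k : i_k \neq j_k\}$ on which $\mathcal{X}_d$ is defined. Consequently $\bar{\xi}_d$ is invariant under both groups, and so is $\mathbf{M}(\bar{\xi}_d)$.

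First I would use the level-flip symmetry to kill the off-diagonal entries. A generic entry of $\mathbf{M}(\bar{\xi}_d)$ is the average of a product of two derived regression functions of the form $g(i_S)-g(j_S) := \prod_{k\in S}g(i_k)-\prod_{k\in S}g(j_k)$ for index sets $S,S'\subseteq\{1,\dots,K\}$ of size $1,2,$ or $3$. Flipping levels in a single attribute $k\in S\triangle S'$ (the symmetric difference) changes the sign of exactly one of the two factors, so the corresponding expectation must vanish. Thus all entries with $S\neq S'$ are zero, giving the claimed block-diagonal structure; additionally, permutation invariance within each block forces the diagonal entries to be equal, so each block is a scalar multiple of the identity. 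This reduces the problem to computing one representative diagonal entry per block.

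Next I would compute $h_s(d)$ for $s=1,2,3$ as $4$ times the probability, under the uniform distribution on $\mathcal{X}_d$, that a fixed subset $S$ of size $s$ satisfies $\prod_{k\in S}g(i_k)\neq\prod_{k\in S}g(j_k)$, since $(g(i_S)-g(j_S))^2\in\{0,4\}$. This latter event is precisely that $|S\cap D|$ is odd, where $D=\{k:i_k\neq j_k\}$ is the (uniformly random) $d$-subset of differing attributes. For $s=1$ this probability is $d/K$, yielding $h_1(d)=4d/K$. For $s=2$ the probability of exactly one differing attribute in $S$ equals $2\binom{K-2}{d-1}/\binom{K}{d}=2d(K-d)/(K(K-1))$, giving $h_2(d)$. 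For $s=3$ one sums the cases $|S\cap D|=1$ and $|S\cap D|=3$, obtaining
\[
\frac{3\binom{K-3}{d-1}+\binom{K-3}{d-3}}{\binom{K}{d}} = \frac{3d(K-d)(K-d-1)+d(d-1)(d-2)}{K(K-1)(K-2)},
\]
which expands to $d(3K^2-6dK+4d^2-3K+2)/(K(K-1)(K-2))$, producing $h_3(d)$.

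The only mildly tricky step is the algebraic expansion in the $s=3$ case; the structural content of the proof is entirely in the symmetry argument of the second paragraph, and nothing beyond elementary combinatorial counting is required.
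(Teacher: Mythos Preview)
Your proof is correct and follows essentially the same route as the paper: both reduce the diagonal entries to counting, for a fixed index set $S$ of size $s$, how many pairs in $\mathcal{X}_d$ have an odd number of differing attributes inside $S$, and both arrive at the same binomial-coefficient expressions for $h_1,h_2,h_3$. The only notable difference is that you justify the vanishing of the off-diagonal entries and the constancy of the diagonal within each block via an explicit group-action (sign-flip and permutation) argument, whereas the paper dispatches this in one line by appealing to ``effect-type coding''; your treatment is more explicit but the underlying mechanism is the same.
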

Here, $\mathbf{Id}_m$ denotes the identity matrix of order $m$. 
The proof of Lemma~\ref{lemma1} is given in the Appendix.
\par
Note that for $d=0$ all pairs have identical attributes ($\mathbf{i}=\mathbf{j}$), $h_r(0)=0$ for $r=1,2,3$, and the information is zero. Hence, the comparison depth $d=0$ can be neglected. Moreover, every invariant design $\bar{\xi}$ can be written as a convex combination $\bar{\xi}=\sum^{K}_{d=1}w_{d}\bar{\xi}_{d}$ of uniform designs on the comparison depths $d$ with corresponding weights $w_{d}\geq 0$, $\sum^{K}_{d=1}w_{d}=1$. Consequently, every invariant design has also diagonal information matrix.
\begin{lemma}\label{lemma2}
Let $\bar{\xi}$ be an invariant design on $\mathcal{X}$, i.\,e.\ $\bar{\xi}=\sum^{K}_{d=1}w_{d}\bar{\xi}_{d}$, then $\bar{\xi}$ has diagonal information matrix
\begin{multline*}
\begin{split}
\mathbf{M}(\bar{\xi})=\begin{pmatrix} h_{1}(\bar{\xi})\mathbf{Id}_{K}&\mathbf{0}&\mathbf{0}\\
 \mathbf{0} & h_{2}(\bar{\xi})\mathbf{Id}_{K\choose2}&\mathbf{0}\\
 \mathbf{0}&\mathbf{0}&h_{3}(\bar{\xi})\mathbf{Id}_{K\choose3}\end{pmatrix},
 \end{split}
\end{multline*}
where $h_{r}(\bar{\xi})=\sum_{d=1}^Kw_dh_r(d)$, $r=1,2,3$.  
\end{lemma}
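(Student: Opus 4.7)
The plan is to reduce Lemma~\ref{lemma2} to Lemma~\ref{lemma1} via the linearity of the information operator in the design measure. Starting from the definition~\eqref{eq:4}, for any two designs $\xi_1,\xi_2$ and scalars $w_1,w_2\ge 0$ with $w_1+w_2=1$ one has $\mathbf{M}(w_1\xi_1+w_2\xi_2)=w_1\mathbf{M}(\xi_1)+w_2\mathbf{M}(\xi_2)$, because each entry of $\mathbf{M}(\xi)$ is a linear functional of the weights $\xi(\mathbf{i},\mathbf{j})$. Iterating this over the convex combination $\bar{\xi}=\sum_{d=1}^K w_d\bar{\xi}_d$ gives
\[
\mathbf{M}(\bar{\xi})=\sum_{d=1}^K w_d\,\mathbf{M}(\bar{\xi}_d).
\]

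The second step is to plug in Lemma~\ref{lemma1}. Each $\mathbf{M}(\bar{\xi}_d)$ is block diagonal with blocks $h_1(d)\mathbf{Id}_K$, $h_2(d)\mathbf{Id}_{K\choose 2}$ and $h_3(d)\mathbf{Id}_{K\choose 3}$ of the same sizes and alignment across $d$. Since the class of block diagonal matrices with a fixed block partition is closed under real linear combinations, summing with the weights $w_d$ yields a block diagonal matrix whose three diagonal blocks are scalar multiples of the respective identity matrices, with scalars $h_r(\bar{\xi}):=\sum_{d=1}^K w_d h_r(d)$ for $r=1,2,3$. This is exactly the claimed form.

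There is essentially no obstacle: the content of Lemma~\ref{lemma2} is bookkeeping on top of Lemma~\ref{lemma1}. The only point that deserves a brief justification in the write-up is the representation of an arbitrary invariant design as a convex combination $\sum_d w_d\bar{\xi}_d$, which has already been argued in the paragraph preceding the lemma from the fact that the orbits $\mathcal{X}_d$ partition $\mathcal{X}$ under the symmetry group (permutations of levels within attributes and of attributes themselves) leaving the $D$-criterion invariant. With that representation in hand, the proof reduces to one line of linearity plus a reference to Lemma~\ref{lemma1}.
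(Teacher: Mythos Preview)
Your argument is correct and is exactly the paper's approach: the paper does not give a separate proof of Lemma~\ref{lemma2} but treats it as an immediate consequence of the convex representation $\bar{\xi}=\sum_{d=1}^K w_d\bar{\xi}_d$, the linearity of $\mathbf{M}(\cdot)$ in the design, and Lemma~\ref{lemma1}. Your write-up simply makes these steps explicit.
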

Next, we consider optimal designs for the main effects, the first-order interaction and the second-order interaction terms separately by maximizing the corresponding entries $h_{1}(d)$, $h_{2}(d)$ and $h_{3}(d)$, respectively, in the information matrix. 
The resulting designs are optimal with respect to any invariant criterion including $D_A$-optimality for the corresponding subset of the parameter vector.
\newtheorem{theorem}{Theorem}
\begin{theorem}\label{theorem1}
Let $d_1^{\ast}=K$. Then the uniform design $\bar{\xi}_{d_1^{\ast}}=\bar{\xi}_{K}$ on the largest possible comparison depth $K$ is universally optimal for the main effects $(\beta_{1},\dots,\beta_{K})^{\top}$.
\end{theorem}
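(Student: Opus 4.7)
The plan is to reduce the optimization to the class of invariant designs, exploit the diagonal structure of the information matrix established in Lemma~\ref{lemma2}, and then observe that the main effects block is a scalar multiple of the identity, so universal optimality follows by maximizing that single scalar $h_1$.

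First, I would appeal to the invariance of every universal (in particular orthogonally invariant) optimality criterion under the group generated by permutations of attributes and by swapping levels within each attribute; the regression function $\mathbf{f}$ is equivariant under these actions in the sense that the induced transformation on the main effects block is orthogonal. A standard averaging argument then shows that for any design $\xi$ there is an invariant design $\bar{\xi}$ whose information matrix dominates the symmetrization of $\mathbf{M}(\xi)$ on the main effects block in the Loewner ordering, so optimality with respect to the main effects parameters may be sought inside the class of invariant designs $\bar{\xi}=\sum_{d=1}^{K}w_d\bar{\xi}_d$.

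Next, by Lemma~\ref{lemma2}, the submatrix of $\mathbf{M}(\bar{\xi})$ corresponding to the main effects equals $h_1(\bar{\xi})\mathbf{Id}_K$ with $h_1(\bar{\xi})=\sum_{d=1}^{K}w_d h_1(d)=\sum_{d=1}^{K}w_d\,\frac{4d}{K}$. Since $h_1(d)=4d/K$ is strictly increasing in $d$, this convex combination is maximized by placing all weight on the largest admissible comparison depth, namely $w_K=1$, yielding $h_1(\bar{\xi}_K)=4$ and $h_1(\bar{\xi})\le 4$ for every invariant design. Hence $\bar{\xi}_K$ maximizes the main effects block in the Loewner ordering among invariant designs.

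Finally, because the main effects block of $\mathbf{M}(\bar{\xi}_K)$ is proportional to the identity matrix, the classical equivalence theorem of Kiefer on universal optimality applies: any design whose information matrix (restricted to the parameters of interest) is a scalar multiple of the identity and whose trace is maximal among competing designs is universally optimal for those parameters. Combined with the invariance reduction above, this gives universal optimality of $\bar{\xi}_K$ for $(\beta_1,\dots,\beta_K)^{\top}$. The only delicate point is the invariance reduction step, where one must verify that the group action leaves the main effects sub-parameter invariant (so that projecting onto that block commutes with the symmetrization); the remaining steps are essentially an immediate consequence of the monotonicity of $h_1(d)$ in $d$.
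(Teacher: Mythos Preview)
Your proposal is correct and follows essentially the same route as the paper: restrict attention to invariant designs by the symmetry argument (which the paper states once, immediately before the theorems, rather than inside the proof), use the diagonal form from Lemma~\ref{lemma2}, and then observe that $h_1(d)=4d/K$ is maximized at $d=K$. The paper's own proof is the single sentence that $h_1(d)=4d/K$ attains its maximum at $d_1^{\ast}=K$; your version simply makes explicit the Kiefer universal-optimality step (identity-proportional information plus maximal trace) and the invariance reduction that the paper treats as ambient.
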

This means that for the main effects only those pairs of alternatives should be used which differ in all attributes.
The proof of Theorem~\ref{theorem1} follows directly from $h_1(d)=4d/K$ which attains its maximum at $d_1^{\ast}=K$.
\begin{theorem}\label{theorem2}
Let $d^{\ast}=K/2$ for $K$ even and $d^{\ast}=(K-1)/2$ or $d^{\ast}=(K+1)/2$ for $K$ odd, respectively.
Then the uniform design $\bar{\xi}_{d_2^{\ast}}$ is universally optimal for the first-order interaction effects $(\beta_{1,2},\dots,\beta_{K-1,K})^{\top}$.
\end{theorem}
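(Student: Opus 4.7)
The plan is to exploit the diagonal structure established in Lemma~\ref{lemma2} together with the invariance of the problem. Since the $D$-criterion and indeed every universal (invariant) criterion is invariant under permutations of levels within attributes and permutations among the attributes themselves, and since the orbits of this group are precisely the sets $\mathcal{X}_d$, a standard Kiefer-type invariance reduction allows us to restrict the search for optimal designs to invariant designs $\bar{\xi}=\sum_{d=1}^{K}w_{d}\bar{\xi}_{d}$. By Lemma~\ref{lemma2}, the submatrix of $\mathbf{M}(\bar{\xi})$ corresponding to the first-order interactions parameters equals $h_{2}(\bar{\xi})\,\mathbf{Id}_{K\choose 2}$, which is a scalar multiple of the identity.

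Because this block is scalar, every isotonic invariant criterion (in particular $D_A$-optimality for the first-order interaction subvector and hence universal optimality in the sense of Kiefer) reduces to maximizing the single scalar $h_{2}(\bar{\xi})=\sum_{d=1}^{K}w_{d}\,h_{2}(d)$. Since this is a convex combination of the values $h_{2}(d)$, its maximum over the simplex of weights is attained by placing all mass on an index $d$ that maximizes $h_{2}(d)$.

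It then remains to maximize
\begin{equation*}
h_{2}(d)=\frac{8d(K-d)}{K(K-1)}
\end{equation*}
over $d\in\{1,\dots,K\}$. This is a concave quadratic in $d$ whose unconstrained maximum lies at $d=K/2$. For $K$ even, $K/2$ is an integer and $d_{2}^{\ast}=K/2$ is the unique maximizer; for $K$ odd, the symmetry $h_{2}(d)=h_{2}(K-d)$ and the concavity force the two integer maximizers to be $d_{2}^{\ast}=(K-1)/2$ and $d_{2}^{\ast}=(K+1)/2$, each giving the same value of $h_{2}$. Either choice yields a universally optimal design $\bar{\xi}_{d_{2}^{\ast}}$.

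No step here should be a genuine obstacle: the hard work (the explicit form of $h_{2}$) is already done in Lemma~\ref{lemma1}. The only point requiring a little care is the invocation of universal optimality rather than mere $D$-optimality; this is justified precisely because the relevant information block is a scalar multiple of the identity, so every orthogonally invariant, monotone criterion is maximized simultaneously by the same design.
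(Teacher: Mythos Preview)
Your proof is correct and follows essentially the same approach as the paper: reduce to invariant designs by symmetry, use Lemma~\ref{lemma2} to see that the first-order interaction block is $h_2(\bar{\xi})\,\mathbf{Id}_{K\choose 2}$, and then maximize the concave quadratic $h_2(d)=8d(K-d)/(K(K-1))$ over integer $d$. The paper's argument is slightly terser (it simply notes $h_2(0)=h_2(K)=0$ and the negative leading coefficient), while you spell out more carefully the convex-combination step and the reason why the scalar block yields universal rather than just $D$-optimality; both arrive at the same conclusion by the same route.
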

This means that for the first-order interactions only those pairs of alternatives should be used which differ in about half of the attributes.
For the proof of Theorem~\ref{theorem2} note that $h_2(0)=h_{2}(K)=0$ and $h_2$ is a quadratic polynomial in $d$ with negative leading coefficient. Then $h_2$ attains its maximum at the middle point $K/2$ which is integer for $K$ even. For $K$ odd the maximum occurs at both of the symmetrical adjacent  integers $(K-1)/2$ and $(K+1)/2$.
\par
It is worth-while mentioning that $h_1(d)$ and $h_2(d)$ are identical to the corresponding values in the first-order interactions model considered in \citet{grasshoff2003optimal} and, hence, the optimal designs of Theorem~\ref{theorem1} and \ref{theorem2} are the same as in the first-order interactions model. 
However, for the second-order interactions the following result is new.
\begin{theorem}\label{theorem3}
Let $d_3^{\ast}=1$ or $d_3^{\ast}=3$ for $K=3$ and $d_3^{\ast}=K$ for $K\geq 4$, respectively.
Then the uniform design $\bar{\xi}_{d_3^{\ast}}$ is universally optimal for the second-order interaction effects $(\beta_{1,2,3},\dots,\beta_{K-2,K-1,K})^{\top}$.
\end{theorem}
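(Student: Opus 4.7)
The plan follows the same three-step template as for Theorems~\ref{theorem1} and \ref{theorem2}: (i) reduce to invariant designs, (ii) observe that on the invariant class the second-order interaction block of the information matrix is a scalar multiple of the identity, and (iii) maximize that scalar over the comparison depth. What is new here is that $h_3$ is a cubic polynomial in $d$, so step (iii) requires a genuine optimization argument rather than the inspection that sufficed for the linear $h_1$ and the downward quadratic $h_2$.

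For step (i) I would invoke Kiefer's invariance principle: the design region $\mathcal{X}$, the second-order interaction regression functions, and every orthogonally invariant criterion on the corresponding parameter block are equivariant under the group generated by swapping the two levels within each attribute and permuting the attributes themselves. Symmetrizing an arbitrary design $\xi$ therefore only improves it, so without loss of generality $\bar{\xi}=\sum_{d=1}^{K}w_d\bar{\xi}_d$. By Lemma~\ref{lemma2} the corresponding block of $\mathbf{M}(\bar{\xi})$ equals $h_3(\bar{\xi})\,\mathbf{Id}_{{K\choose 3}}$ with $h_3(\bar{\xi})=\sum_d w_d h_3(d)$, a convex combination dominated by $\max_d h_3(d)$. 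Since this block is a scalar multiple of the identity, maximizing the scalar $h_3(\bar{\xi})$ coincides with optimizing every orthogonally invariant criterion on it, so locating any integer $d_3^{\ast}$ attaining $\max_{d\in\{1,\dots,K\}}h_3(d)$ produces a universally optimal design $\bar{\xi}_{d_3^{\ast}}$.

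For step (iii) I would first evaluate $h_3(K)=4$ from the formula, and then carry out the polynomial division
\[
h_3(d)-4 \;=\; \frac{4(d-K)}{K(K-1)(K-2)}\bigl(4d^2-2Kd+(K-1)(K-2)\bigr),
\]
which makes the root at $d=K$ manifest and leaves a quadratic factor $q(d)=4d^2-2Kd+(K-1)(K-2)$ whose sign controls the comparison. The discriminant of $q$ equals $-4(3K^2-12K+8)$, which is strictly negative for every $K\geq 4$ (the bracket is positive at $K=4$ and increasing thereafter). Hence $q$ has no real root and is strictly positive, so the factor $d-K\leq 0$ forces $h_3(d)<4$ for every $d<K$ and $d_3^{\ast}=K$ is the unique maximizer. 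For $K=3$ the quadratic factors as $q(d)=2(2d-1)(d-1)$, vanishing at $d=1$ and positive at $d=2,3$; combined with the sign of $d-3$ this yields $h_3(1)=h_3(3)=4$ and $h_3(2)=0$, so both $d_3^{\ast}=1$ and $d_3^{\ast}=3$ are optimal.

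The main obstacle is spotting that $d=K$ (rather than some ``middle'' value analogous to Theorem~\ref{theorem2}) is the maximizer, and finding a factorization that simultaneously handles the $K\geq 4$ case and the degeneracy at $K=3$. Once the factorization above is in hand the discriminant analysis is routine, and the $K=3$ tie can be understood from the fact that only comparison depths of odd parity can carry information about the unique three-factor interaction in that case.
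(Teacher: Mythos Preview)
Your argument is correct, but step (iii) proceeds along a different path from the paper's. The paper analyses the cubic $h_3$ via calculus: it locates the two critical points $d_{3,\max}=K/2-\sqrt{9K-6}/6$ and $d_{3,\min}=K/2+\sqrt{9K-6}/6$, exploits the point symmetry of $h_3$ about $(K/2,h_3(K)/2)$, and then verifies directly that $h_3(d_{3,\min})>0$ for $K\geq 4$ (by reducing the numerator at a critical point to $8d^2(3K-4d)$), which by symmetry forces $h_3(d_{3,\max})<h_3(K)$.

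Your route is purely algebraic: having observed $h_3(K)=4$, you divide out the factor $d-K$ from $h_3(d)-4$ and are left with the quadratic $q(d)=4d^2-2Kd+(K-1)(K-2)$, whose discriminant $-4(3K^2-12K+8)$ is negative for $K\geq 4$. This settles $h_3(d)<4$ for all $d<K$ in one stroke, and the same factorization degenerates cleanly at $K=3$ to recover the tie $h_3(1)=h_3(3)=4$. Your approach is more elementary and self-contained, while the paper's calculus-based analysis of the shape of $h_3$ dovetails with the later study of the cubic variance function in Theorem~\ref{theorem5}. Both are equally valid; yours is arguably the tidier proof of the theorem as stated.
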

This means that also for the second-order interactions only those pairs of alternatives should be used which differ in all attributes.
\begin{proof}[Proof of Theorem~\ref{theorem3}]
For $K=3$ we get $h_3(1)=h_3(3)=4$ and $h_3(2)=0$ which establishes the result in this case.
\par
For $K\geq 4$ note that the function $h_{3}$ is a cubic polynomial in the comparison depth $d$ with positive leading coefficient.
Extended to a function on the real line $h_3$ is point symmetric with respect to $(K/2,h_3(K)/2)$ and attains its local maximum and local minimum at $d_{3, \max}=K/2-\sqrt{9K-6}/6$ and $d_{3, \min}=K/2+\sqrt{9K-6}/6$, respectively. 
Now, the numerator of $h_3(d_{3, \min})$ is proportional to $d_{3, \min}^2(3K-4d_{3, \min})$. Inserting the solution for $d_{3, \min}$ into the last factor yields $3K-4d_{3, \min}=K-2\sqrt{K-2/3}$ which is equal to $0.349$ for $K=4$ and increasing in $K\geq 3$. 
Hence, $h_3(d_{3, \min})>0$ for $K\leq 4$ and, by symmetry, $h_3(d_{3, \max})<h_3(K)$ which proves the result.
\end{proof}
It is worth-while mentioning that a single comparison depth $d$ may be sufficient for non-singularity of the information matrix $\mathbf{M}(\bar{\xi}_d)$, i.e.\ for the identifiability of all parameters. This can be easily seen by observing $h_r(1)>0$, $r=1,2,3$, for $d=1$. But this is not true for all comparison depths as $h_2(K)=0$.
Moreover, in view of Theorems \ref{theorem1}, \ref{theorem2} and \ref{theorem3} no design exists which is  universally optimal for the whole parameter vector. 
As a consequence, we confine ourselves to the $D$-criterion to derive optimal design for the whole parameter vector.
\par
Define by 
$v((\textbf{i},\textbf{j}),\xi)=(\textbf{f}(\textbf{i})-\textbf{f}(\textbf{j}))^{\top}\textbf{M}(\xi)^{-1}(\textbf{f}(\textbf{i})-\textbf{f}(\textbf{j}))$ the variance function for the design $\xi$ which plays an important role for the $D$-criterion. 
According to the Kiefer-Wolfowitz equivalence theorem \citep{kiefer1960equivalence} a design $\xi^{\ast}$ is $D$-optimal if the associated variance function is bounded by the number of parameters, $v((\mathbf{f}(\mathbf{i}),\mathbf{f}(\mathbf{j})),\xi^{\ast})\leq p$.
Now, for invariant designs $\bar{\xi}$ the variance function $v((\textbf{i},\textbf{j}),\bar{\xi})$ is also invariant with respect to permutations of levels and attributes and is, hence, constant on the orbits $\mathcal{X}_{d}$ of fixed comparison depth $d$. 
Hence, the value of the variance function for an invariant design $\bar{\xi}$ evaluated at comparison depth $d$ may be denoted by $v(d,\bar{\xi})$, say, where $v(d,\bar{\xi})=v((\mathbf{i},\mathbf{j}),\bar{\xi})$ on $\mathcal{X}_{d}$. The following results provide formulae for calculating the variance function.
\begin{theorem}\label{thrm4}
For every invariant design $\bar{\xi}$ the variance function $v(d,\bar{\xi})$ is given by 
\begin{equation*}
v(d,\bar{\xi})=4d\left(\frac{1}{h_{1}(\bar{\xi})}+\frac{K-d}{h_{2}(\bar{\xi})}+\frac{3K^{2}-6dK+4d^{2}-3K+2}{6h_{3}(\bar{\xi})}\right)
\end{equation*}
\end{theorem}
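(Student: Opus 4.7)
My plan is to exploit the block-diagonal structure of $\mathbf{M}(\bar{\xi})$ from Lemma~\ref{lemma2} and then reduce the variance function to a counting problem on the attribute indices.

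First I would split $\mathbf{f}(\mathbf{i})-\mathbf{f}(\mathbf{j})$ into its three canonical pieces corresponding to main effects, first-order interactions and second-order interactions, matching the block structure from Lemma~\ref{lemma2}. Since $\mathbf{M}(\bar{\xi})^{-1}$ is block-diagonal with blocks $h_r(\bar{\xi})^{-1}\mathbf{Id}$, the quadratic form splits cleanly as
\begin{equation*}
v((\mathbf{i},\mathbf{j}),\bar{\xi})=\frac{S_1(\mathbf{i},\mathbf{j})}{h_1(\bar{\xi})}+\frac{S_2(\mathbf{i},\mathbf{j})}{h_2(\bar{\xi})}+\frac{S_3(\mathbf{i},\mathbf{j})}{h_3(\bar{\xi})},
\end{equation*}
where $S_r$ is the squared Euclidean norm of the $r$th block of $\mathbf{f}(\mathbf{i})-\mathbf{f}(\mathbf{j})$. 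So the task reduces to evaluating these three sums for a pair $(\mathbf{i},\mathbf{j})\in\mathcal{X}_d$.

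Next I would use the crucial observation that for each attribute $k$, $g(j_k)=g(i_k)$ if $i_k=j_k$ and $g(j_k)=-g(i_k)$ otherwise. Consequently, for any index subset $S\subseteq\{1,\ldots,K\}$, the product difference $\prod_{k\in S}g(i_k)-\prod_{k\in S}g(j_k)$ equals $0$ if $|\{k\in S:i_k\neq j_k\}|$ is even, and equals $\pm 2\prod_{k\in S}g(i_k)\in\{-2,2\}$ if it is odd. Each nonzero term therefore contributes exactly $4$ to the relevant squared norm. Thus $S_1=4d$ (one-element subsets of the differing-index set), $S_2=4\cdot d(K-d)$ (two-element subsets containing exactly one differing index), and $S_3=4\bigl(d\binom{K-d}{2}+\binom{d}{3}\bigr)$ (three-element subsets with an odd number of differing indices).

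Finally I would simplify $S_3$ algebraically to match the stated form. Writing $d\binom{K-d}{2}+\binom{d}{3}=\tfrac{1}{2}d(K-d)(K-d-1)+\tfrac{1}{6}d(d-1)(d-2)$ and collecting terms in $d$ yields $\tfrac{d}{6}(3K^2-6dK+4d^2-3K+2)$, so that $S_3=\tfrac{2d}{3}(3K^2-6dK+4d^2-3K+2)$. Substituting $S_1$, $S_2$, $S_3$ into the decomposition and factoring $4d$ out yields the displayed formula. The only mildly delicate step is the last algebraic simplification, but it is a routine polynomial identity; the essential content is the parity observation that selects which subsets contribute.
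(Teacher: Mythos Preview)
Your proposal is correct and essentially identical to the paper's proof: both invert the block-diagonal $\mathbf{M}(\bar{\xi})$ from Lemma~\ref{lemma2}, reduce the three resulting sums to counting which index subsets contribute a nonzero squared difference, and perform the same combinatorial count $\binom{d}{3}+d\binom{K-d}{2}$ with the same algebraic simplification. Your unified parity observation for arbitrary subsets $S$ is a slightly cleaner packaging of what the paper does case by case, but the argument is the same.
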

The proof is given in the Appendix.
If the invariant design $\bar{\xi}$ is concentrated on a single comparison depth, then this representation simplifies.
\newtheorem{corollary}{Corollary}
\begin{corollary}\label{cor_thrm4}
For a uniform design $\bar{\xi}_{d^{\prime}}$ on a single comparison depth $d^{\prime}$ the variance function is given by  \begin{equation*}
v(d,\bar{\xi}_{d^{\prime}})=\frac{d}{d^{\prime}}\left(p_{1}+p_{2}\frac{K-d}{K-d^{\prime}}+p_{3}\frac{3K^{2}-6dK+4d^{2}-3K+2}{3K^{2}-6d^{\prime}K+4d^{\prime2}-3K+2}\right).
\end{equation*}
\end{corollary}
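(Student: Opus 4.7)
The plan is to derive the corollary as a direct specialization of Theorem~\ref{thrm4} to a design concentrated on a single comparison depth, so the work is essentially a substitution and a regrouping that identifies the resulting coefficients with the component dimensions $p_1$, $p_2$, $p_3$.

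First I would observe that the uniform design $\bar{\xi}_{d'}$ corresponds to the invariant weighting $w_{d'}=1$ and $w_d=0$ for $d\neq d'$, so by Lemma~\ref{lemma2} we have $h_r(\bar{\xi}_{d'})=h_r(d')$ for $r=1,2,3$. Then I would insert the explicit expressions from Lemma~\ref{lemma1}, namely $h_1(d')=4d'/K$, $h_2(d')=8d'(K-d')/(K(K-1))$, and $h_3(d')=4d'(3K^2-6d'K+4d'^{2}-3K+2)/(K(K-1)(K-2))$, into the formula of Theorem~\ref{thrm4}.

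Next I would simplify the three summands separately. Taking the $1/h_1$ term, the factor $4d/h_1(d')$ reduces to $(d/d')K=(d/d')p_1$. For the $1/h_2$ term, $4d(K-d)/h_2(d')$ collapses to $(d/d')\cdot(K-d)/(K-d')\cdot K(K-1)/2$, and one recognises the last factor as $p_2$. The $1/h_3$ term is handled analogously: the factor $4d(3K^{2}-6dK+4d^{2}-3K+2)/(6h_3(d'))$ becomes $(d/d')\cdot(3K^{2}-6dK+4d^{2}-3K+2)/(3K^{2}-6d'K+4d'^{2}-3K+2)\cdot K(K-1)(K-2)/6$, with the last factor equal to $p_3$. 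Pulling the common prefactor $d/d'$ out of the resulting sum yields exactly the expression claimed in the corollary.

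There is no real obstacle here; the only point requiring a little care is the algebraic matching of the numerical constants $K$, $K(K-1)/2$, $K(K-1)(K-2)/6$ that arise from the reciprocals of $h_1(d')$, $h_2(d')$, $h_3(d')$ to the parameter-count dimensions $p_1$, $p_2$, $p_3$ introduced in Section~3. Once this identification is made explicit, the stated form of $v(d,\bar{\xi}_{d'})$ follows immediately.
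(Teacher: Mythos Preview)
Your proposal is correct and follows essentially the same route as the paper: the paper's proof simply states that the result follows by inserting the values $h_r(\bar{\xi}_{d'})=h_r(d')$ from Lemma~\ref{lemma1} and the identifications $p_r=\binom{K}{r}$ into the formula of Theorem~\ref{thrm4}. Your write-up merely spells out the three substitutions and the factoring of $d/d'$ in more detail, but there is no methodological difference.
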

As a consequence, for $d=d^{\prime}$, we immediately obtain $v(d,\bar{\xi}_{d})=p_1+p_2+p_3=p$ which recovers the $D$-optimality of $\bar{\xi}_d$ on $\mathcal{X}_d$ in view of the Kiefer-Wolfowitz equivalence theorem. 
The following result gives an upper bound on the number of comparison depths required for a $D$-optimal design.
\begin{theorem}\label{theorem5}
In the second-order interactions model the $D$-optimal design $\xi^{\ast}$ is supported on, at most, three different comparison depths $K$, $d^{\ast}$ and $d^{\ast}+1$, say, i.e.\  
$\xi^{\ast}=w^{\ast}_{K}\xi_K+w^{\ast}_{d^{\ast}}\xi_{d^{\ast}}+(1-w^{\ast}_{K}-w^{\ast}_{d^{\ast}})\xi_{d^{\ast}+1}$.
\end{theorem}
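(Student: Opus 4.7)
The plan is to reduce to invariant designs, translate the Kiefer-Wolfowitz equivalence condition into a statement about the real roots of a cubic polynomial in the comparison depth $d$, and then analyse the sign structure of this cubic on the integer grid $\{1,\ldots,K\}$.

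First, I would argue that a $D$-optimal design may be chosen invariant. The $D$-criterion is invariant under the group $G$ of permutations of the levels within attributes and of the attributes themselves, and $\log\det$ is concave on the cone of positive-semidefinite matrices; hence the symmetrization $\bar{\xi}=|G|^{-1}\sum_{\pi\in G}\pi.\xi$ of any design $\xi$ is invariant and satisfies $\det\mathbf{M}(\bar{\xi})\geq\det\mathbf{M}(\xi)$, since $\mathbf{M}(\pi.\xi)$ is similar to $\mathbf{M}(\xi)$ via a permutation matrix and therefore has the same determinant. We may thus assume $\xi^{\ast}=\sum_{d=1}^{K}w_{d}\bar{\xi}_{d}$, and the statement concerns the set of $d$ with $w_{d}>0$.

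Second, by Theorem~\ref{thrm4} the map $d\mapsto v(d,\xi^{\ast})$ is a polynomial of degree three in $d$ with leading coefficient $8/(3 h_{3}(\xi^{\ast}))$, which is strictly positive because otherwise the information matrix would be singular and $\xi^{\ast}$ could not be $D$-optimal. Setting $q(d):=v(d,\xi^{\ast})-p$, the Kiefer-Wolfowitz equivalence theorem states that $\xi^{\ast}$ is $D$-optimal if and only if $q(d)\le 0$ for every $d\in\{1,\ldots,K\}$, with equality at every support point. A cubic has at most three real roots, so the support of $\xi^{\ast}$ already consists of at most three comparison depths.

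Finally, to pin down the shape of the support I would use that $q$ has positive leading coefficient and hence tends to $+\infty$ as $d\to\infty$. The largest real root of $q$ must therefore be at least $K$, for otherwise $q(K)>0$ would contradict Kiefer-Wolfowitz; since every support point lies in $\{1,\ldots,K\}$, that largest root equals $K$ and $K$ belongs to the support. If $q$ has two further distinct real roots $d_{1}<d_{2}$ in $\{1,\ldots,K-1\}$, then $q$ is strictly positive on the open interval $(d_{1},d_{2})$, so this interval contains no integer; this forces $d_{2}=d_{1}+1$, and we may set $d^{\ast}=d_{1}$. The main hurdle is precisely this last sign analysis, which must be checked uniformly across the possible multiplicity patterns of the cubic (three simple real roots; a simple and a double root; a single real root with a complex conjugate pair), but in each case the positivity of the leading coefficient simultaneously pins the rightmost real zero at $d=K$ and enforces consecutiveness of any interior pair of zeros.
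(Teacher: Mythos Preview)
Your plan follows essentially the same route as the paper's own proof: reduce to invariant designs, use Theorem~\ref{thrm4} to recognize $q(d)=v(d,\xi^{\ast})-p$ as a cubic in $d$ with positive leading coefficient, and combine the Kiefer--Wolfowitz equivalence theorem with the sign pattern of a cubic to force at most three support depths with the largest equal to $K$ and the other two adjacent. One small imprecision to tighten: your assertion that ``the largest root equals $K$ and $K$ belongs to the support'' does not follow in general (the largest real root of $q$ may exceed $K$, and $K$ need not carry weight); it is only in the case of \emph{three} distinct support depths that these depths exhaust the roots of $q$, whence the largest root lies in $\{1,\ldots,K\}$ and therefore equals $K$---which is exactly the case you need, so the argument goes through once phrased conditionally.
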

\begin{proof}
 According to a corollary of the Kiefer-Wolfowitz equivalence theorem for the $D$-optimal design $\xi^{\ast}$ the variance function $v(d,\xi^{\ast})$ is equal to the number of parameters $p$ for all $d$ such that $w_d^{\ast}>0$. By Theorem~\ref{thrm4} the variance function is a cubic polynomial in the comparison depth $d$ with positive leading coefficient. According to the fundamental theorem of algebra the variance function $v(d,\xi^{\ast})$ may thus be equal to $p$ for, at most, three different values $d_1<d_2<d_3$ of $d$, say. Now, by the Kiefer-Wolfowitz equivalence theorem itself $v(d,\xi^{\ast})\leq p$ for all $d=0,1,\dots,K$. Hence, by the shape of the variance function we obtain that in the case of three different comparison depths $d_3=K$ and $d_2=d_1$ must hold. For two comparison depths either $d_2=K$ or two adjacent comparison depths $d_1$ and $d_2=d_1+1$ are included.
\end{proof}
For $K=3$ the $D$-optimal design can be given explicitly.
\begin{theorem}\label{theorem6}
For $K=3$ the  uniform design $\xi^{\ast}=\frac37\bar{\xi}_1+\frac37\bar{\xi}_2+\frac17\bar{\xi}_3$ on all pairs with non-zero comparison depth is $D$-optimal in the second-order interactions model.
\end{theorem}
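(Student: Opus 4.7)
The plan is to verify the $D$-optimality of $\xi^*$ by direct application of the Kiefer-Wolfowitz equivalence theorem. For $K=3$ one has $p_1=3$, $p_2=3$ and $p_3=1$, so $p=7$. Since $\xi^*$ is invariant, Lemma~\ref{lemma2} puts $\mathbf{M}(\xi^*)$ in block-diagonal form with scalar entries $h_r(\xi^*)$, and Theorem~\ref{thrm4} then expresses the variance function as a polynomial in the comparison depth $d$. The optimality condition thus reduces to showing $v(d,\xi^*)\le 7$ for $d\in\{0,1,2,3\}$, with equality wherever $\xi^*$ has positive weight, i.e.\ at $d=1,2,3$.

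The first concrete step is to specialize Lemma~\ref{lemma1} to $K=3$, yielding $h_1(1)=4/3$, $h_1(2)=8/3$, $h_1(3)=4$; $h_2(1)=h_2(2)=8/3$, $h_2(3)=0$; and $h_3(1)=h_3(3)=4$, $h_3(2)=0$. Averaging against the weights $w_1=w_2=3/7$ and $w_3=1/7$ produces the key identity
\[
h_1(\xi^*) \;=\; h_2(\xi^*) \;=\; h_3(\xi^*) \;=\; \tfrac{16}{7}.
\]
Substituting this common value into Theorem~\ref{thrm4} causes substantial cancellation, and after a short simplification the variance function collapses to the cubic
\[
v(d,\xi^*) \;=\; \frac{7\,d\,(d^2-6d+11)}{6}.
\]
Direct substitution gives $v(0,\xi^*)=0$ and $v(1,\xi^*)=v(2,\xi^*)=v(3,\xi^*)=7$, which is exactly the Kiefer-Wolfowitz optimality condition.

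No step is genuinely hard once the weights are in hand; the real interpretive work lies in motivating the choice $(3/7,3/7,1/7)$. These can be identified a priori by combining Theorem~\ref{theorem5} (the $D$-optimal design is supported on at most three comparison depths) with the equivalence theorem: for $K=3$ the only nonzero depths are $\{1,2,3\}$, so if all three are active then the three equations $v(d,\xi)=7$ for $d=1,2,3$ together with the normalization $w_1+w_2+w_3=1$ form a consistent linear system whose unique nonnegative solution produces exactly the claimed weights. The remaining verification, outlined above, confirms this candidate satisfies the equivalence theorem globally.
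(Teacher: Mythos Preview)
Your proof is correct and follows essentially the same route as the paper's direct verification: compute $h_r(\xi^*)=16/7$ for $r=1,2,3$, substitute into the variance formula of Theorem~\ref{thrm4} to obtain $v(d,\xi^*)=7d(d^2-6d+11)/6$, and check that this equals $p=7$ at $d=1,2,3$, which is exactly what the paper does. The only addition in the paper is a one-line alternative observing that for $K=3$ the second-order interactions model is already the full interactions model, so the result also follows by citation to \citet{grasshoff2003optimal}.
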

\begin{proof}
For $K=3$ the second-order interactions model is a full interactions model. Hence, the results follows directly from Theorem~4 in \citet{grasshoff2003optimal}.
\par
Alternatively, we may compute the variance function by first computing $h_r(\xi^{\ast})=16/7$, $r=1,2,3$, and then deriving $v(d,\xi^{\ast})=7d(d^2-6d+11)/6$ which results in $v(d,\xi^{\ast})=7$ for $d=1,2,3$. Because $p=7$ for $K=3$ the $D$-optimality of $\xi^{\ast}$ follows from the Kiefer-Wolfowitz equivalence theorem.
\end{proof}
Hence, for $K=3$ all three comparison depths are needed for $D$-optimality.
For $K\geq 4$ numerical computations indicate that only two different comparison depths $K$ and $d^{\ast}$ are required.
In Table~\ref{tab:1} numerical solutions are presented for numbers $K$ of attributes between $4$ and $10$.
\begin{table}[H]
\caption{Optimal comparison depths and optimal weights for $K$ binary attributes}\label{tab:1} 
\begin{center}
\begin{tabular}{c|c*{7}{c}}
$K$
&\makebox[1.5em]{4}&\makebox[1.5em]{5}&\makebox[1.5em]{6}&\makebox[1.5em]{7}&\makebox[1.5em]{8}&\makebox[1.5em]{9}&\makebox[1.5em]{10}\\
$w_{K}^{\ast}$
&0.143&0.167&0.268&0.303&0.356&0.423&0.462\\\hline
$d^{\ast}$
&2&2&3&3&3&4&4\\
$w_{d^*}^{\ast}$
&0.857&0.833&0.732&0.697&0.644&0.577&0.538
\end{tabular}
\end{center}
\end{table} 
For fixed number $K$ of attributes and intermediate comparison depth $d$ the optimal weights  $w_K^{\ast}$ and  $w_d^{\ast}=1-w_K^{\ast}$ have been determined analytically by direct maximization of $\ln(\det(\mathbf{M}(w_K\bar{\xi}_K+(1-w_K)\bar{\xi}_d)))$.
In particular, for $K=5$, $7$ and $9$ we obtain the optimal intermediate comparison depth $d^{\ast}=(K-1)/2$ and corresponding optimal weights 
\begin{equation*}
\resizebox{1.0\hsize}{!}{
$w_K^*=\frac{2K^3-6K^2+7K-K\sqrt{K^{6}-12K^{5}+64K^{4}-198K^{3}+448K^{2}-636K+369}\ +15}{-K^{4}+2K^{3}-2K^{2}+10K+15}$}
\end{equation*}
and $w_{d^{\ast}}^{\ast}=1-w_K^{\ast}$.
For $K=4$ and $6$ we get the optimal intermediate comparison depth $d^{\ast}=K/2$ and corresponding optimal weights 
\begin{equation*}
w_K^*=\frac{K^2-6K+11}{K^2+5}
\end{equation*}
and $w_{d^{\ast}}^{\ast}=1-w_K^{\ast}$, respectively,
while for $K=8$ and $10$ the optimal intermediate comparison depth is $d^{\ast}=(K/2)-1$ with corresponding optimal weights 
\begin{equation*}
\resizebox{0.8\hsize}{!}{
$w_K^*=\frac{K^3+5K+(K-K^2)\sqrt{K^{4}-10K^{3}+37K^{2}-60K+180}\ +30}{-K^{4}+K^{3}+K^{2}+5K+30}$},
\end{equation*}
and $w_{d^{\ast}}^{\ast}=1-w_K^{\ast}$, respectively. The analytical results were computed using the Maple (version 2017) software (\citeauthor{maple}, 1981-2017).
The optimality of these designs has been checked numerically by virtue of the Kiefer-Wolfowitz equivalence theorem. The corresponding values of the normalized variance function $v(d,\xi^{\ast})/p$ are recorded in Table~\ref{tab:2} in the Appendix, where maximal values less or equal to $1$ establish optimality. 
\par
Based on numerical findings for larger $K$ up to $100$ we conjecture that for all numbers $K\geq 4$ of attributes only one intermediate comparison depth $d^{\ast}$ is required for $D$-optimality which is equal to $(K-1)/2$ for $K$ odd and $(K/2)-1$ for $K\geq 8$ even, respectively.
\section{Discussion}
For paired comparisons in an additive (main effects only) model optimal designs require that the components of the alternatives in the choice sets show distinct levels in all attributes \citep[see][]{grasshoff2004optimal}. 
In a first-order interactions model we have to consider pairs for an optimal design in which approximately one half of the attributes are distinct and one half of the attributes coincide \citep[see][]{grasshoff2003optimal}.
Here, it is shown that in a second-order interactions model we have to consider both types of pairs in which either all attributes have distinct levels or approximately one half of the attributes are distinct and one half of the attributes coincide to obtain a $D$-optimal design for the whole parameter vector. 
The invariance considerations used here can be readily extended to partial profiles, to higher-order interaction models and to larger numbers of levels for each attribute.
\vspace{5mm} \\
\textbf{Acknowledgement.}
This work was partially supported by Grant - Doctoral Programmes in Germany, 2016/2017 (57214224) - of the German Academic Exchange Service (DAAD).
\bibliographystyle{apa}     
\bibliography{reference3rd}   

\begin{thebibliography}{}

\bibitem[\protect\astroncite{Bradley and Terry}{1952}]{bradley1952rank}
Bradley, R.~A. and Terry, M.~E. (1952).
\newblock Rank analysis of incomplete block designs: I. {T}he method of paired
  comparisons.
\newblock {\em Biometrika}, 39:324--345.

\bibitem[\protect\astroncite{Gra{\ss}hoff et~al.}{2003}]{grasshoff2003optimal}
Gra{\ss}hoff, U., Gro{\ss}mann, H., Holling, H., and Schwabe, R. (2003).
\newblock Optimal paired comparison designs for first-order interactions.
\newblock {\em Statistics}, 37:373--386.

\bibitem[\protect\astroncite{Gra{\ss}hoff et~al.}{2004}]{grasshoff2004optimal}
Gra{\ss}hoff, U., Gro{\ss}mann, H., Holling, H., and Schwabe, R. (2004).
\newblock Optimal designs for main effects in linear paired comparison models.
\newblock {\em Journal of Statistical Planning and Inference}, 126:361--376.

\bibitem[\protect\astroncite{Gro{\ss}mann et~al.}{2002}]{grossmann2002advances}
Gro{\ss}mann, H., Holling, H., and Schwabe, R. (2002).
\newblock Advances in optimum experimental design for conjoint analysis and
  discrete choice models.
\newblock In {\em \rm{{F}ranses, {P}. {H}., {M}ontgomery, {A}. {L}. ({E}ds.)},
  \it{{A}dvances in {E}conometrics, Vol. 16. {E}conometric {M}odels in
  {M}arketing}}. JAI Press, Amsterdam, pp. 93--117.

\bibitem[\protect\astroncite{Gro{\ss}mann and
  Schwabe}{2015}]{grossmann2015handbook}
Gro{\ss}mann, H. and Schwabe, R. (2015).
\newblock Design for discrete choice experiments.
\newblock In {\em \rm{Dean, {A}., {M}orris, {M}., {S}tufken, {J}., {B}ingham,
  {D}. ({E}ds.)}, \it{{H}andbook of {D}esign and {A}nalysis of {E}xperiments}}.
  Chapman and Hall/CRC, Boca Raton, FL, pp. 787--831.

\bibitem[\protect\astroncite{Kiefer}{1959}]{kiefer1959optimum}
Kiefer, J. (1959).
\newblock Optimum experimental designs.
\newblock {\em Journal of the Royal Statistical Society. Series B},
  21:272--304.

\bibitem[\protect\astroncite{Kiefer and
  Wolfowitz}{1960}]{kiefer1960equivalence}
Kiefer, J. and Wolfowitz, J. (1960).
\newblock The equivalence of two extremum problems.
\newblock {\em Canadian Journal of Mathematics}, 12:363--366.

\bibitem[\protect\astroncite{Louviere et~al.}{2000}]{louviere2000stated}
Louviere, J.~J., Hensher, D.~A., and Swait, J.~D. (2000).
\newblock {\em Stated {C}hoice {M}ethods: {A}nalysis and {A}pplication}.
\newblock Cambridge University Press, Cambridge.

\bibitem[\protect\astroncite{{Maple Inc.}}{2017}]{maple}
{Maple Inc.} (1981-2017).
\newblock Maplesoft, a division of {W}aterloo {M}aple {I}nc., {W}aterloo,
  {O}ntario.

\bibitem[\protect\astroncite{Schwabe}{1996}]{1996optimum}
Schwabe, R. (1996).
\newblock Optimum designs for multi-factor models.
\newblock {\em Lecture Notes in Statistics 113: {S}pringer, {N}ew {Y}ork}.

\bibitem[\protect\astroncite{Schwabe et~al.}{2003}]{schwabe2003optimal}
Schwabe, R., Gra{\ss}hoff, U., Gro{\ss}mann, H., and Holling, H. (2003).
\newblock Optimal $2^{K}$ paired comparison designs for partial profiles.
\newblock In {\em PROBA\-STAT\ 2002, Proceedings of the Fourth International
  Conference on Mathematical Statistics, Smolenice 2002, Tatra Mountains
  Mathematical Publications, \rm{26:79--86}}.

\bibitem[\protect\astroncite{Street and Burgess}{2004}]{street2004optimal}
Street, D.~J. and Burgess, L. (2004).
\newblock Optimal and near-optimal pairs for the estimation of effects in
  2-level choice experiments.
\newblock {\em Journal of Statistical Planning and Inference}, 118:185--199.

\bibitem[\protect\astroncite{Street and Burgess}{2007}]{street2007construction}
Street, D.~J. and Burgess, L. (2007).
\newblock {\em The {C}onstruction of {O}ptimal {S}tated {C}hoice {E}xperiments:
  {T}heory and {M}ethods}.
\newblock Wiley, Hoboken, NJ.

\bibitem[\protect\astroncite{Train}{2003}]{train2003discrete}
Train, K.~E. (2003).
\newblock {\em Discrete {C}hoice {M}ethods with {S}imulation}.
\newblock Cambridge University Press, Cambridge.

\bibitem[\protect\astroncite{van Berkum}{1987a}]{van1987optimal1}
van Berkum, E. (1987a).
\newblock Optimal paired comparison designs for factorial and quadratic models.
\newblock {\em Journal of Statistical Planning and Inference}, 15:265.

\bibitem[\protect\astroncite{van Berkum}{1987b}]{van1987optimal}
van Berkum, E. E.~M. (1987b).
\newblock {\em Optimal {P}aired {C}omparison {D}esigns for {F}actorial
  {E}xperiments}.
\newblock CWI Tract 31, Amsterdam.

\end{thebibliography}
\begin{appendices}
\section*{APPENDIX}
\begin{proof}[Proof of Lemma~\ref{lemma1}]
First we note that for the function $g$ defined in Example~1 we have
$g(i)^{2}=1$, $g(i)g(j)=-1$, $(g(i)-g(j))^{2}=4$ for $i\neq j$.
Given a fixed comparison depth $d$ we obtain for the regression functions $f_{k}=g$ associated with the $k$-th main effect 
\begin{equation*}
\begin{split}
&\sum_{(\textbf{i},\textbf{j})\in\mathcal{X}_d}(f_{k}(i_{k})-f_{k}(j_{k}))^{2}=\left(\begin{smallmatrix}K-1 \\ d-1\end{smallmatrix}\right)2^{K}\cdot 4=\left(\begin{smallmatrix}K-1 \\ d-1\end{smallmatrix}\right)2^{K+2}
\end{split}
\end{equation*}
because there are ${K-1}\choose{d-1}$$2^{K}$ pairs in $\mathcal{X}_d$ for which $i_k$ and $j_k$ differ.
Since the number $N_d$ of paired comparisons in $\mathcal{X}_{d}$ equals $N_{d}=\left(\begin{smallmatrix}K \\ d\end{smallmatrix}\right)2^{K}$, the corresponding diagonal entries $h_{1}(d)$ in the information matrix are given by 
 \begin{equation}\label{eqn:13}
\begin{split}
h_{1}(d)=&\frac{1}{N_d}\sum_{(\textbf{i},\textbf{j})\in\mathcal{X}_d}(f_{k}(i_{k})-f_{k}(j_{k}))^{2}=\frac{4d}{K}
\end{split}
\end{equation}
\cite[compare][where a slightly different normalization is used]{grasshoff2003optimal}. 
\par
For first-order interactions, we consider attributes $k$ and $\ell$, say, and distinguish between pairs in which both attributes are distinct and pairs in which only one of these attributes has distinct levels in the alternatives while the same level is presented in both alternatives for the other attribute.
\par
 In the case $i_k \neq j_k$ and $i_\ell \neq j_\ell$ we have $g(i_k)g(i_\ell)=g(j_k)g(j_\ell)$, while for $i_\ell=j_\ell$ we get $g(i_k)g(i_\ell)=-g(j_k)g(j_\ell)$. Hence
\begin{equation*}
\begin{split}
&(g(i_{k})g(i_{\ell})-g(j_{k})g(j_{\ell}))^{2}=0 \quad\mathrm{ for }\quad i_{k}\neq j_{k} \quad\mathrm{and}\quad i_{\ell}\neq j_{\ell}
\end{split}
\end{equation*}
and
\begin{equation*}
\begin{split}
&(g(i_{k})g(i_{\ell})-g(j_{k})g(j_{\ell}))^{2}=4 \quad\mathrm{ for }\quad i_{k}\neq j_{k} \quad\mathrm{and}\quad i_{\ell}=j_{\ell},
\end{split}
\end{equation*}
respectively, where (in the latter case) the roles of the attributes $k$ and $\ell$ may be interchanged. 
\par
For given attributes $k$ and $\ell$ pairs with distinct levels in both attributes occur $\left(\begin{smallmatrix}K-2 \\ d-2\end{smallmatrix}\right)2^{K}$ times in $\mathcal{X}_d$, while those which differ only in one attribute occur $\left(\begin{smallmatrix}2 \\ 1\end{smallmatrix}\right)\left(\begin{smallmatrix}K-2 \\ d-1\end{smallmatrix}\right)2^{K}$ times. As a result, for the first-order interactions the diagonal elements $h_{2}(d)$ in the information matrix are given by 
\begin{equation}\label{eqn13}
\begin{split}
h_2(d)=&\frac{1}{N_d}2\begin{pmatrix}K-2 \\ d-1\end{pmatrix}2^{K}\cdot 4=\frac{8d(K-d)}{K(K-1)}
\end{split}
\end{equation}
\cite[compare][]{grasshoff2003optimal}.
\par
Accordingly, for second-order interactions, we consider attributes $k$, $\ell$ and $m$, say, and distinguish between pairs in which all three attributes are distinct,  pairs in which two of these attributes $k$ and $\ell$, say, have distinct levels in the alternatives while the same level is presented in both alternatives for the remaining attribute and, finally, pairs in which only one of the attributes, say, $k$ has distinct levels in the alternatives while the same level is presented in both alternatives for the two remaining attributes.
Then $g(i_{k})g(i_{\ell})g(i_{m})=-g(j_{k})g(j_{\ell})g(j_{m})$ in the first and third case, while $g(i_{k})g(i_{\ell})g(i_{m})=g(j_{k})g(j_{\ell})g(j_{m})$ in the second case. Hence,
\begin{equation*}
\begin{split}
&(g(i_{k})g(i_{\ell})g(i_{m})-g(j_{k})g(j_{\ell})g(j_{m}))^{2}=4 \quad\mathrm{ for }\quad i_{k}\neq j_{k}, \ i_{\ell}\neq j_{\ell} \quad\mathrm{and}\quad i_{m}\neq j_{m},
\end{split}
\end{equation*}
\begin{equation*}
\begin{split}
&(g(i_{k})g(i_{\ell})g(i_{m})-g(j_{k})g(j_{\ell})g(j_{m}))^{2}=0 \quad\mathrm{ for }\quad i_{k}\neq j_{k}, \ i_{\ell}\neq j_{\ell} \quad\mathrm{and}\quad i_{m}=j_{m}
\end{split}
\end{equation*}
and
\begin{equation*}
\begin{split}
&(g(i_{k})g(i_{\ell})g(i_{m})-g(j_{k})g(j_{\ell})g(j_{m}))^{2}=4 \quad\mathrm{ for }\quad i_{k}\neq j_{k}, \ i_{\ell}=j_{\ell} \quad\mathrm{and}\quad i_{m}=j_{m},
\end{split}
\end{equation*}
respectively, where again the roles of the attributes $k$, $\ell$ and $m$ may be interchanged. 
\par
For given attributes $k$, $\ell$ and $m$ the pairs with distinct levels in the three attributes occur $\left(\begin{smallmatrix}K-3 \\ d-3\end{smallmatrix}\right)2^{K}$ times in $\mathcal{X}_d$ while those which differ in two attributes occur $\left(\begin{smallmatrix}3 \\ 2\end{smallmatrix}\right)\left(\begin{smallmatrix}K-3 \\ d-2\end{smallmatrix}\right)2^{K}$ times in $\mathcal{X}_d$, and those which differ only in one attribute occur $\left(\begin{smallmatrix}3 \\ 1\end{smallmatrix}\right)\left(\begin{smallmatrix}K-3 \\ d-1\end{smallmatrix}\right)2^{K}$ times. As a result, for the second-order interactions the diagonal elements $h_{3}(d)$ in the information matrix are given by
\begin{align}\label{eqn:15}
h_{3}(d)&=\frac{1}{N_d}\begin{pmatrix}\begin{pmatrix}K-3 \\ d-3\end{pmatrix}2^{K}\cdot 4+3\begin{pmatrix}K-3 \\ d-1\end{pmatrix}2^{K}\cdot 4\end{pmatrix} \nonumber\\
&=\frac{4d((d-1)(d-2)+3(K-d)(K-d-1))}{K(K-1)(K-2)} \nonumber\\
&=\frac{4d(3K^{2}-6dK+4d^{2}-3K+2)}{K(K-1)(K-2)}.
\end{align}
Finally, it can be noted that all off-diagonal entries in the information matrix vanish because the terms in the corresponding sums add up to zero due to the effect-type coding.
\end{proof}

\begin{proof}[Proof of Theorem~\ref{thrm4}]
First we note that
\begin{equation*}
\begin{split}
\mathbf{M}(\bar{\xi})^{-1}=\begin{pmatrix} \frac{1}{h_{1}(\bar{\xi})}\textbf{Id}_{K}&\mathbf{0}&\mathbf{0}\\
 \mathbf{0} & \frac{1}{h_{2}(\bar{\xi})}\mathbf{Id}_{K\choose2}&\mathbf{0}\\
 \mathbf{0}&\mathbf{0}&\frac{1}{h_{3}(\bar{\xi})}\mathbf{Id}_{K\choose3}\end{pmatrix},
 \end{split}
\end{equation*}
for the inverse of the information matrix of the design $\bar{\xi}$.
Hence, we obtain for the variance function
\begin{align}\label{eqn:16}
v((\mathbf{i},\mathbf{j}),\bar{\xi})=&(\mathbf{f}(\mathbf{i})-\mathbf{f}(\mathbf{j}))^{\top}\mathbf{M}(\bar{\xi})^{-1}(\mathbf{f}(\mathbf{i})-\mathbf{f}(\mathbf{j}))\nonumber\\
=&\frac{1}{h_1(\bar{\xi})}\sum_{k=1}^{K}(g(i_k)-g(j_k))^2 \nonumber\\
&\mbox{}+\frac{1}{h_2(\bar{\xi})}\sum_{k<\ell}(g(i_k)g(i_{\ell})-g(j_k)g(j_{\ell}))^2  \nonumber \\
&\mbox{}+\frac{1}{h_3(\bar{\xi})}\sum_{k<\ell<m}(g(i_k)g(i_{\ell})g(i_m)-g(j_k)g(j_{\ell})g(j_m))^2.
\end{align} 
\par
As in the proof of Lemma~\ref{lemma1} we note first that for the terms associated with the main effects we have $(g(i_k)-g(j_k))^{2}=4$, when $i_k \neq j_k$, and $(g(i_k)-g(j_k))^{2}=0$ otherwise.
For a pair $(\mathbf{i},\mathbf{j})$ of comparison depth $d$ there are exactly $d$ attributes for which $i_k$ and $j_k$ differ.
Hence, the first sum on the right hand side of \eqref{eqn:16} equals $4d$.
\par
Second, for the terms associated with the first-order interactions we have $(g(i_k)g(i_{\ell})-g(j_k)g(j_{\ell}))^2=4$, if either $i_k \neq j_k$ and $i_\ell=j_{\ell}$ or $i_k = j_k$ and $i_\ell\neq j_{\ell}$, and $(g(i_k)g(i_{\ell})-g(j_k)g(j_{\ell}))^2=0$ otherwise.
For a pair $(\mathbf{i},\mathbf{j})$ of comparison depth $d$ there are $d(K-d)$ first-order interaction terms for which $(i_k,i_{\ell})$ and $(j_k,j_{\ell})$ differ in exactly one attribute $k$ or $\ell$.
Hence, the second sum on the right hand side of \eqref{eqn:16} equals $4d(K-d)$.
\par
Finally, for the terms associated with the second-order interactions we have $(g(i_k)g(i_{\ell})g(i_m)-g(j_k)g(j_{\ell})g(j_m))^2=4$, if $(i_k,i_{\ell},i_m)$ and $(j_k,j_{\ell},j_m)$ differ either in all three attributes $k$, $\ell$ and $m$ or in exactly one of these attributes, and $(g(i_k)g(i_{\ell})g(i_m)-g(j_k)g(j_{\ell})g(j_m))^2=0$ otherwise.
For a pair $(\mathbf{i}, \mathbf{j})$ of comparison depth $d$ there are $\left(\begin{smallmatrix} d \\ 3\end{smallmatrix}\right)$ second-order interaction terms for which all three associated attributes differ, and there are $d\left(\begin{smallmatrix} K-d \\ 2\end{smallmatrix}\right)$ second-order interaction terms for which $(i_k,i_{\ell},i_m)$ and $(j_k,j_{\ell},j_m)$ differ in exactly one attribute.
Hence, there are
\begin{equation*}
\begin{split}
\left(\begin{smallmatrix} d \\ 3 \end{smallmatrix}\right)+d\left(\begin{smallmatrix} K-d \\ 2\end{smallmatrix}\right) &=d(d-1)(d-2)/6+d(K-d)(K-d-1)/2\\
&=d(3K^{2}-6dK+4d^{2}-3K+2)/6
\end{split}
\end{equation*} 
non-zero entries in the third sum on the right hand side of \eqref{eqn:16}, and this sum equals $4d(3K^{2}-6dK+4d^{2}-3K+2)/6$.
\par
By inserting these results into \eqref{eqn:16} we see that the value of the variance function  depends on the pair $(\mathbf{i},\mathbf{j})$ only through its comparison depth $d$ and obtain the formula proposed.
\end{proof}
\begin{proof}[Proof of Corollary~\ref{cor_thrm4}]
This representation of the variance function follows immediately by inserting the values of $h_{r}(\bar{\xi}_d)$ from Lemma \ref{lemma1} and $p_r={K \choose r}$, $r=1,2,3$, into the formula of Theorem~\ref{thrm4}.
\end{proof}
\noindent
\begin{table}[H]
\caption{Values of the variance function of the $D$-optimal design $\xi^{\ast}$ for $K$ binary attributes (boldface \textbf{1} corresponds to the optimal comparison depths)}\label{tab:2} 
\begin{tabular}{r|cc*{10}{c}}
$K$&\makebox[1.5em]{1}&\makebox[1.5em]{2}&\makebox[1.5em]{3}&\makebox[1.5em]{4}&\makebox[1.5em]{5}&\makebox[1.5em]{6}&\makebox[1.5em]{7}&\makebox[1.5em]{8}&\makebox[1.5em]{9}&\makebox[1.5em]{10}\\\hline
4 &0.875& \ $\textbf{1}$&  0.875  &$\textbf{1}$&  &&&&&\\       
5 &0.760& \ $\textbf{1}$&0.960&0.880&$\textbf{1}$&&&&&\\
6 & 0.701&0.983&$\textbf{1}$&0.906&0.855&$\textbf{1}$&&&&\\
7 & 0.615&0.917&$\textbf{1}$&0.956&0.879&0.863&$\textbf{1}$&&&\\
8 &0.559&0.872&\textbf{1}&1&0.945&0.884&0.882&$\textbf{1}$&&\\
9 &0.504&0.811&0.962&$\textbf{1}$&0.969&0.910&0.868&0.883&$\textbf{1}$&\\
10&0.462&0.763&0.932&$\textbf{1}$&0.997&0.956&0.905&0.874&0.896&$\textbf{1}$\\\hline
\end{tabular}\\
\end{table} 
\end{appendices}
\end{document}